\documentclass[a4paper]{article}
\usepackage{spconf,amsmath,graphicx,cite,amssymb}


\newtheorem{theorem}{Theorem}[section]

\newenvironment{proof}[1][Proof]{\begin{trivlist}
\item[\hskip \labelsep {\bfseries #1}]}{\end{trivlist}}

\newenvironment{example}[1][Example]{\begin{trivlist}
\item[\hskip \labelsep {\bfseries #1}]}{\end{trivlist}}

\newcommand{\qed}{\nobreak \ifvmode \relax \else
      \ifdim\lastskip<1.5em \hskip-\lastskip
      \hskip1.5em plus0em minus0.5em \fi \nobreak
      \vrule height0.75em width0.5em depth0.25em\fi}

\DeclareMathOperator{\rank}{rank}
\DeclareMathOperator{\diag}{diag}

\def\mB{\mbox{$\mathbf{B}$}}

\def\mD{\mbox{$\mathbf{D}$}}

\newcommand{\E}{{\cal E}}
\newcommand{\G}{{\cal G}}

\newcommand{\Real}{\mathbb R}

\newcommand{\abs}[1]{\left\vert#1\right\vert}

\def\E{\mbox{${\cal E}$}}
\def\G{\mbox{${\cal G}$}}

\def\b0{\mbox{\boldmath $0$}}

\def\bff{\mbox{\boldmath $f$}}
\def\bg{\mbox{\boldmath $g$}}

\def\bu{\mbox{\boldmath $u$}}

\def\bx{\mbox{\boldmath $x$}}

\def\bpsi{\mbox{\boldmath $\psi$}}

\def\bphi{\mbox{\boldmath $\phi$}}

\title{On the degrees of freedom of signals on graphs}
%
\name{Mikhail Tsitsvero and Sergio Barbarossa\thanks{This work was supported by the European Project TROPIC Project, Nr. 318784.}}
\address{Sapienza Univ. of Rome, DIET Dept., Via Eudossiana 18, 00184 Rome, Italy\\
E-mail: {\tt tsitsvero@gmail.com, sergio.barbarossa@uniroma1.it}}
\begin{document}

\maketitle
\begin{abstract}
Continuous-time signals are well known for not being perfectly localized in both time and frequency domains. Conversely, a signal defined over the vertices of a graph can be perfectly localized in both vertex and frequency domains. We derive the conditions ensuring the validity of this property and then, building on this theory, we provide the conditions for perfect reconstruction of a graph signal from its samples. Next, we provide a finite step algorithm for the reconstruction of a band-limited signal from its samples and then we show the effect of sampling a non perfectly band-limited signal and show how to select the bandwidth that minimizes the mean square  reconstruction error. 
\end{abstract}
\begin{keywords}
Graph Fourier Transform, sampling on graph, graph signal recovery
\end{keywords}
\section{Introduction}
\label{sec:intro}
In many applications, from sensor to social networks, transportation systems and gene regulatory networks, the signals of interest are defined over the vertices of a graph. In most cases the signal domain is not a metric space, as for example with biological networks, where the vertices may be genes, proteins, enzymes, etc. This marks a fundamental difference with respect to time signals where the time domain is inherently a metric space. The last years witnessed a significant advancement in the development of signal processing tools devoted to the analysis of signals on graph \cite{shuman2013emerging}, \cite{sandryhaila2013discrete}. Of particular interest is the application of this relatively new discipline to the analysis of big data, as proposed in \cite{sandryhaila2014big}. As in conventional signal processing, a central role is played by the spectral analysis of signals on graph, which passes through the introduction of the so called Graph Fourier Transform (GFT)  \cite{shuman2013emerging}. An equivalent uncertainty principle was recently derived for signals on graphs, essentially transposing Heisenberg's methodology to signal defined over a graph \cite{agaskar2013spectral}. However, although conceptually interesting, the transposition of the uncertainty principle from continuous time (and frequency) signals to signals on graphs presents a series of shortcomings, essentially related to the fact that while time and frequency are  metric spaces, the vertex domain is not. This means  that the distance between time instants cannot be transposed into the distance between vertices, evaluated for example through number of hops. To overcome these critical aspects, in this paper we resort to an alternative definition of time and frequency spread, as proposed in the seminal works of Slepian et al. \cite{Slepian:1961:PSW}, later extended by Pearl \cite{pearl1973time} and Donoho and Stark \cite{donoho1989uncertainty}.  This alternative approach allows us to show that, differently from continuous-time signals, a signal on graph can indeed be {\it perfectly localized in both vertex and frequency domains}. Interestingly, the framework necessary to prove  this perfect localization property paves the way to derive a sampling theorem for signals on graph. Building on this theory, we derive the necessary and sufficient condition and closed form expression for the perfect reconstruction of a band-limited graph signal from its samples. Finally, numerical results based on the real-world data, non necessarily band-limited, show how the developed framework may be applied for evaluating the  bandwidth that minimizes the reconstruction error.

\noindent{\bf Notation: } We consider an undirected graph $\G = (V, \E)$ consisting of a set of $N$ nodes $V = \{1,2,..., N\}$ along with a set of weighted edges $\E=\{a_{ij}\}, \ i, j \in V$, such that $a_{ij}>0$ if there is a link between nodes $i$ and $j$. The adjacency matrix $\mathbf{A}$ of the graph is the collection of all the weights $a_{ij}, i, j = 1, \ldots, N$. The degree of node $i$ is $d_i:=\sum_{j=1}^{N}a_{ij}$. The degree matrix is a diagonal matrix having the node degrees on its diagonal: $\mathbf{K} = \diag \{ d_1, d_2, ... , d_N \}$.  The combinatorial Laplacian matrix is defined as $\mathbf{L} = \mathbf{K}-\mathbf{A}$. In the literature it is also common to use normalized graph Laplacian matrix $\mathbf{\mathcal{L}}=\mathbf{K}^{-1/2}\mathbf{L}\mathbf{K}^{-1/2}$. We will use the combinatorial Laplacian matrix in the further derivations, but the approach is not limited to this case and alternative definitions could be used depending on the applications.

Since the Laplacian matrix of an undirected graph is symmetric and positive semi-definite matrix, it may be diagonalized as
\begin{equation}
\label{slep_func:laplacian_eigendecomposition}
\mathbf{L} = \mathbf{U} \mathbf{\Lambda} \mathbf{U}^T = \sum_{i=1}^N \lambda_i \bu_i \bu^T_i,
\end{equation}
where $\mathbf{\Lambda}$ is a diagonal matrix with real non-negative eigenvalues on its diagonal and $\{\bu_i \}$ is the set of real-valued orthonormal eigenvectors. A signal $\bx$ over a graph $\G$ is defined as a mapping $\bx : V \rightarrow \mathbb{R}^{\abs{V}} $. Laplacian eigenvectors endow a graph with a natural criterion of smoothness, as in the continuous time case, where the eigenfunctions of the Laplacian operator constitute the  basis of conventional Fourier transform. 

The Graph Fourier Transform is defined as a projection operator onto the space spanned by the Laplacian eigenvectors \cite{shuman2013emerging}:
\begin{equation}
\label{slep_func:gft}
\hat{\bff} = \mathbf{U}^T \bff
\end{equation}
with inverse
\begin{equation}
\label{slep_func:gft_inverse}
\mathbb{\bff} = \mathbf{U} \hat{\bff}.
\end{equation}
Given a subset of vertices $S \subseteq V$, we define a vertex-limiting operator as a diagonal matrix  $\mathbf{D}$ whose generic diagonal entry $D_{ii}$ is equal to one if $i \in S$ or zero, otherwise. Similarly, given a subset of frequency indices $F \subseteq V^*$, where $V^*=\{1, \ldots, N\}$, we introduce  a filtering operator
\begin{equation}
\label{lowpass_operator}
\mathbf{B} = \mathbf{U \Sigma U}^T,
\end{equation}
where $\mathbf{\Sigma}$ is a diagonal matrix whose diagonal entry $\Sigma_{ii}$ is equal to one if $i \in F$ or zero, otherwise. It is immediate to check that both matrices $\mathbf{D}$ and $\mathbf{B}$ are symmetric and idempotent, and then orthogonal projectors. We introduce also the complement vertex-limiting operator $\mathbf{D}^c$, defined as $\mathbf{D}$ but operating over the complement set $S^c=V\setminus S$, and the frequency-limiting complement operator $\mathbf{B}^c$ defined as $\mathbf{B}$, but operating over $F^c=V\setminus F$. 

\section{Localization Properties}
\label{Degrees of freedom}

%

The study of time and band-limited signals led Slepian, Pollak and Landau to the well-known prolate spheroidal wave functions, which in turn clarified a number of fundamental questions concerning degrees of freedom, uncertainty principle and signal approximation. In this work we follow a similar path, inspired by the seminal work of Slepian \cite{slepian1976bandwidth} and subsequent  generalization by Donoho and Stark \cite{donoho1989uncertainty} to the case when time and bandwidth are not intervals.

Following Slepian \cite{slepian1976bandwidth}, we start building an optimal set of GFT bandlimited vectors $\bpsi_i$, $i=1, \ldots, n$ that are maximally concentrated over some subset $S$ in vertex domain. The vectors $\bpsi_i$ can be found as the solution of the following iterative optimization problem
\begin{equation}
\label{slep_func:max_problem}
\begin{aligned}
\bpsi_i &&= \  & \underset{\bpsi_i}{\arg \max \,}
 \| \mathbf{D} \bpsi_i \|_2 \\
&&& \text{subject to} \\ 
&&& \| \bpsi_i\|_2 = 1,\\ 
&&& \mathbf{B} \bpsi_i = \bpsi_i,\\
&&& \text{if} \ i>1,\, \bpsi^T_i \bpsi_j = 0, \ j = 1,...,i-1.
\end{aligned}
\end{equation}
In this way, $\bpsi_1$ is the band-limited vector maximally concentrated on $S$, $\bpsi_2$ is the band-limited vector belonging to the subspace orthogonal to $\bpsi_1$, which is maximally concentrated on $S$, and so on. In this way, we build a set of orthonormal bandlimited vectors whose energies are maximally concentrated over $S$. These vectors are the counterpart of the prolate spheroidal wave functions introduced by Slepian and Pollack in \cite{Slepian:1961:PSW}.

Because of the band-limiting constraint, problem (\ref{slep_func:max_problem}) does not change if we substitute $\mathbf{D}$ with $\mathbf{DB}$ in the objective function. The problem can then be reformulated as follows
\begin{equation}
\label{slep_func:max_problem_restated}
\begin{aligned}
\bpsi_i = \ & \underset{\bpsi_i}{\arg \max}
& & \| \mathbf{DB} \bpsi_i \|_2 \\
& \text{subject to}
& & \|\bpsi_i\|_2 = 1,\\
&&& \text{if} \ i>1,\, \bpsi^T_i \bpsi_j = 0, \ j = 1,...,i-1.
\end{aligned}
\end{equation}
Using Rayleigh-Ritz theorem, the solutions of (\ref{slep_func:max_problem_restated})  are known to be the eigenvectors of $\left(\mathbf{DB}\right)^T \mathbf{DB} = \mathbf{BDB}$, i.e.
\begin{equation}
\label{slep_func:bdb_eigendecomposition}
\mathbf{BDB} \bpsi_i = \lambda_i \bpsi_i.
\end{equation}
Clearly, the vectors $\bpsi_i$ are bandlimited by construction and hence they correspond to solutions of (\ref{slep_func:max_problem}). The set of eigenvectors $\bpsi_i$ (we will refer to them as Slepian vectors on graphs) constitute a basis for all GFT bandlimited functions. Moreover, they are orthogonal on the sampling subset
\begin{equation}
\label{slep_func:orthogonality_on_sampling_set}
\bpsi_i^T \mathbf{D} \bpsi_j = \lambda_j \delta_{ij},
\end{equation}
where $\delta_{ij}$ is the Kronecker symbol.
Equation (\ref{slep_func:orthogonality_on_sampling_set}) follows from bandlimitedness of $\bpsi_i$ and from the self-adjointness of $\mathbf{B}$. 


Let us consider now the question if a signal on graph can be perfectly localized over a vertex set $S$ and a frequency set $F$. The answer is given by the following
\begin{theorem} 
\label{theorem_unit_eigenvalue}
There is a vector $\bx$, perfectly localized over both vertex set $S$ and frequency set $F$, if and only if the matrix $\mathbf{B} \mathbf{D}\mathbf{B}$ has an eigenvalue equal to one; in such a case, $\bx$ is an eigenvector associated to the unit eigenvalue.
\end{theorem}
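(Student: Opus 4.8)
The plan is to read perfect localization as the pair of fixed-point conditions $\mathbf{D}\bx = \bx$ (all the energy of $\bx$ is confined to the vertex set $S$) and $\mathbf{B}\bx = \bx$ ($\bx$ is bandlimited to $F$), and to lean throughout on the fact, already recorded after (\ref{lowpass_operator}), that $\mathbf{D}$ and $\mathbf{B}$ are symmetric idempotents, i.e. orthogonal projectors. The forward direction is then immediate: if such an $\bx$ exists, then $\mathbf{B}\mathbf{D}\mathbf{B}\bx = \mathbf{B}\mathbf{D}(\mathbf{B}\bx) = \mathbf{B}\mathbf{D}\bx = \mathbf{B}\bx = \bx$, so $\bx$ is an eigenvector of $\mathbf{B}\mathbf{D}\mathbf{B}$ associated with the eigenvalue one.

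For the converse I would start from a unit-norm eigenvector $\bx$ with $\mathbf{B}\mathbf{D}\mathbf{B}\bx = \bx$ and recover the two localization properties in two short steps. First, left-multiplying the eigenequation by $\mathbf{B}$ and using idempotency $\mathbf{B}^2 = \mathbf{B}$ gives $\mathbf{B}\bx = \mathbf{B}\mathbf{B}\mathbf{D}\mathbf{B}\bx = \mathbf{B}\mathbf{D}\mathbf{B}\bx = \bx$, so $\bx$ is automatically bandlimited. Second, using $\mathbf{B}\bx = \bx$ together with the symmetry and idempotency of $\mathbf{D}$, I would evaluate the quadratic form $1 = \|\bx\|^2 = \bx^T \mathbf{B}\mathbf{D}\mathbf{B}\bx = (\mathbf{B}\bx)^T \mathbf{D}(\mathbf{B}\bx) = \bx^T \mathbf{D}\bx = \|\mathbf{D}\bx\|^2$, so that $\|\mathbf{D}\bx\| = \|\bx\|$; since $\mathbf{D}$ is an orthogonal projector this forces $\mathbf{D}\bx = \bx$, giving the vertex localization and hence showing that the eigenvector $\bx$ itself is the desired perfectly localized signal.

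The step that deserves genuine care, and that I regard as the conceptual heart of the statement rather than mere bookkeeping, is the implication $\|\mathbf{D}\bx\| = \|\bx\| \Rightarrow \mathbf{D}\bx = \bx$. It relies on the orthogonal decomposition $\|\bx\|^2 = \|\mathbf{D}\bx\|^2 + \|(\mathbf{I}-\mathbf{D})\bx\|^2$, valid precisely because $\mathbf{D}$ is an orthogonal projector, which shows that saturating the contraction bound must annihilate the component of $\bx$ lying outside the range of $\mathbf{D}$. Equivalently, one can first observe that $\mathbf{B}\mathbf{D}\mathbf{B}$ is positive semidefinite with all eigenvalues in $[0,1]$, since $\bv^T\mathbf{B}\mathbf{D}\mathbf{B}\bv = \|\mathbf{D}\mathbf{B}\bv\|^2 \le \|\mathbf{B}\bv\|^2 \le \|\bv\|^2$ for every $\bv$; the eigenvalue one is then exactly the extreme value at which this chain of projector inequalities becomes a chain of equalities, and reading off each equality yields $\mathbf{B}\bx = \bx$ and $\mathbf{D}\bx = \bx$. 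Beyond this equality-case argument I do not expect any real difficulty, as everything else is driven mechanically by the projector identities $\mathbf{B}^2 = \mathbf{B}$, $\mathbf{D}^2 = \mathbf{D}$ and their self-adjointness.
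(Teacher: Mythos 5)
Your proof is correct and follows essentially the same route as the paper: the forward direction by repeated application of $\mathbf{B}\bx=\bx$, $\mathbf{D}\bx=\bx$, and the converse by left-multiplying by $\mathbf{B}$ to get bandlimitedness and then a quadratic-form argument for vertex localization. Your treatment of the final step is in fact slightly more careful than the paper's, which only invokes the Rayleigh quotient $\max_{\bx} \bx^T\mathbf{D}\bx/\bx^T\bx = 1$ and asserts the conclusion, whereas you make explicit the equality case $\|\mathbf{D}\bx\|=\|\bx\|\Rightarrow\mathbf{D}\bx=\bx$ via the orthogonal decomposition.
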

\begin{proof}:
If a vector $\bx$ is perfectly localized in both vertex and frequency domains, then the following equalities hold: 
\begin{equation}
\label{Bx=x;Dx=x}
\mB \bx=\bx, \,\,\,\mD \bx=\bx.
\end{equation}
Now we prove that, if these conditions hold, then $\bx$ must be an eigenvector of $\mB \mD \mB$ associated to a unit eigenvalue. Indeed, in such a case, by repeated applications of (\ref{Bx=x;Dx=x}), it follows
\begin{equation}
\label{BDBx = BDx}
\mB \mD \mB \bx = \mB \mD \bx = \mB \bx = \bx.
\end{equation}
This proves the first part. Now, let us prove that, if $\bx$ is an eigenvector of $\mB \mD \mB$ associated to a unit eigenvalue, then $\bx$ must respect (\ref{Bx=x;Dx=x}). Indeed, starting from
\begin{equation}
\label{BDBx=x}
\mB \mD \mB \bx = \bx
\end{equation}
and multiplying from the left side by $\mB$, taking into account that $\mB^2=\mB$, we get
\begin{equation}
\label{BDBx=Bx}
\mB \mD \mB \bx = \mB \bx
\end{equation}
Equating (\ref{BDBx=x}) to (\ref{BDBx=Bx}), we get
\begin{equation}
\mB \bx= \bx,
\end{equation}
which implies that $\bx$ is perfectly localized in the frequency domain. Now, using this property, we can also write
\begin{equation}
1=\max_{\bx} \frac{\bx^T \mB \mD \mB \bx}{\bx^T \bx}=\max_{\bx} \frac{\bx^T \mD \bx}{\bx^T \bx}
\end{equation}
This shows that $\bx$ is also an eigenvector of $\mathbf{D}$ associated to a unit eigenvalue, i.e., $\bx$ is also perfectly localized in the vertex domain.
\qed
\end{proof}
Equivalently, since $\sigma_i(\mB \mD) = \sigma_i(\mD \mB)$, perfect localization onto the sets $S$ and $F$ is achieved if the following properties hold true:
\begin{equation}
\label{|BD|=1=|DB|}
\|\mB \mD\|_2 = 1; \,\,\,\, \|\mD \mB\|_2 = 1.
\end{equation}
If a vector $\bx$ is perfectly localized over $S$ and $F$, then 
\begin{equation}
\label{BcDx=0}
\mB^c \mD \bx=\b0.
\end{equation}
Hence, since $\mathbf{U}$ is a unitary matrix, perfect localization is feasible only if the following system of linear equations admits a non-trivial solution
\begin{equation}
\label{Sigma_U_D}
\mathbf{\Sigma}^c \mathbf{U}^T \mathbf{D} \bphi = \b0.
\end{equation}
This equation can be rewritten in compact form by retaining only the nonzero rows corresponding to the indices $i$ belonging to the complement set $F^c$, and the columns corresponding to the indices in $S$. More specifically, denoting by $i_1, \ldots, i_{|F^c|}$ the indices in $F^c$ and by $j_1, \ldots, j_{|S|}$ the indices in $S$, (\ref{Sigma_U_D}) can be rewritten as
\begin{equation}
\label{slae_homogeneous}
\mathbf{G} \bphi = \b0,
\end{equation}
where $G_{k\ell} = u_{i_k} (j_{\ell})$, for $k=1, \ldots, N-|F|$ and  $\ell=1, \ldots, |S|$, is the $j_{\ell}$ entry of the $i_k$-th column of  $\mathbf{U}$.
Matrix $\mathbf{G}$ has the dimensionality $(N - |F|) \times |S|$. Clearly, if
\begin{equation}
\label{Perfect_localization}
|S| \geq N-|F| +1
\end{equation}
system (\ref{slae_homogeneous}) admits a nontrivial solution. The inequality (\ref{Perfect_localization}) is then the condition for {\it perfect localization} in both vertex and frequency domains. This is indeed a remarkable property that marks a fundamental difference with respect to continuous-time signals. Conversely, if $|S| \leq (N - |F|)$,  system (\ref{slae_homogeneous}) can still admit a non-trivial solution, but only if the matrix $\mathbf{G}$ is not full column rank.

For any given pair of sets $S$ and $F$, the dimension of the signal subspace of signals perfectly localized over both vertex and GFT sets is given by the following theorem.
\begin{theorem}[Number of the degrees of freedom]
\label{theorem::dof}
The number of singular values equal to 1 of the operator $\mathbf{B} \mathbf{D}$ is equal to 
\begin{equation}
\label{number_of_ones}
C := \rank \mathbf{D}-\rank \mathbf{B}^c \mathbf{D}.
\end{equation}
\end{theorem}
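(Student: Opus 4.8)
The plan is to convert the statement about unit singular values of $\mB\mD$ into a rank computation by exploiting the orthogonal-projector identity $\mD=\mB\mD+\mB^c\mD$ and then closing with the rank--nullity theorem. First I would recall that the singular values of $\mB\mD$ are the nonnegative square roots of the eigenvalues of $(\mB\mD)^T\mB\mD=\mD\mB\mD$ (using $\mB^2=\mB$ and $\mD^2=\mD$), so that the number of unit singular values equals $\dim\ker(\mD\mB\mD-\mI)$. Any $\bx$ with $\mD\mB\mD\bx=\bx$ satisfies $\bx=\mD(\mB\mD\bx)$, hence lies in the column space $\mathcal{R}(\mD)$; in fact these are precisely the perfectly localized vectors of Theorem~\ref{theorem_unit_eigenvalue}, for which $\mB\bx=\bx$ and $\mD\bx=\bx$. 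The whole eigenvalue-one eigenspace therefore lives inside $\mathcal{R}(\mD)$.

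Next I would use that $\mB$ and $\mB^c$ are complementary orthogonal projectors, so $\|\mB\bz\|_2^2+\|\mB^c\bz\|_2^2=\|\bz\|_2^2$ for every $\bz$; taking $\bz=\mD\bx$ yields the operator identity $\mD\mB\mD+\mD\mB^c\mD=\mD$. Restricting both sides to $\mathcal{R}(\mD)$, where $\mD$ acts as the identity, turns $\mD\mB\mD$ into $\mI-\mD\mB^c\mD$, so its unit-eigenvalue eigenspace coincides with the kernel of $\mD\mB^c\mD$ restricted to $\mathcal{R}(\mD)$. Since $\mD\mB^c\mD=(\mB^c\mD)^T(\mB^c\mD)$, that kernel equals $\{\bx\in\mathcal{R}(\mD):\mB^c\mD\bx=\b0\}$.

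Finally I would apply rank--nullity to the restriction of $\mB^c\mD$ to $\mathcal{R}(\mD)$. Because $\mB^c\mD$ annihilates $\mathcal{R}(\mD)^\perp=\ker\mD$, restricting its domain to $\mathcal{R}(\mD)$ changes neither its image nor its rank, so the restricted map still has rank $\rank\mB^c\mD$. Rank--nullity then gives $\dim\{\bx\in\mathcal{R}(\mD):\mB^c\mD\bx=\b0\}=\dim\mathcal{R}(\mD)-\rank\mB^c\mD=\rank\mD-\rank\mB^c\mD=C$, which is exactly the number of unit singular values of $\mB\mD$.

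The step I expect to be the crux is not any single computation but the structural observation that every unit-eigenvalue eigenvector is confined to $\mathcal{R}(\mD)$; once the problem is pushed onto this subspace, the projector split $\mD\mB\mD+\mD\mB^c\mD=\mD$ converts a spectral statement into a transparent dimension count. The only detail needing care is verifying that passing to the restriction does not alter $\rank\mB^c\mD$, which holds precisely because $\mB^c\mD$ vanishes off $\mathcal{R}(\mD)$.
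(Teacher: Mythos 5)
Your proof is correct and follows essentially the same route as the paper: the paper's one-line proof points back to the preceding discussion, which identifies the unit-singular-value subspace of $\mB\mD$ with the perfectly localized vectors, i.e.\ the solutions of $\mB^c\mD\bx=\b0$ supported on $S$, and counts them as $\rank\mD-\rank\mB^c\mD$ via the coordinate matrix $\mathbf{G}$. You carry out exactly this nullity count, just coordinate-free, using the projector identity $\mD\mB\mD+\mD\mB^c\mD=\mD$ and rank--nullity, and in doing so you supply the details the paper leaves implicit.
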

\begin{proof}
The proof  follows the arguments just reported above. \qed
\end{proof}

%
In general, the number of singular values from the "transition region", i.e. $0 < \lambda_i < 1$  is 
\begin{align}
\label{dof:num_of_decaying}
Q := 
\begin{cases}
\rank \mathbf{B} \mathbf{D}^c, \ \text{if} \ \rank \mathbf{D} \geq \rank \mathbf{B}; \\
\rank \mathbf{B}^c \mathbf{D}, \ \text{if} \ \rank \mathbf{D} \leq \rank \mathbf{B},
\end{cases}
\end{align}
and the number of singular values equal to zero must be equal to $O := N - C - Q$.

\section{Sampling signals on graphs}
\label{ssec:sampling}

Graph signals may represent various phenomena and in many practical applications the acquisition of signals on a high number of vertices may be too expensive or too lenghty. This kind of problems pushes us to consider the reconstruction of bandlimited functions on graphs from a limited number of samples. Considerable amount of work has been done recently regarding the problem of GFT bandlimited signal reconstruction from incomplete sampled measurements. 
There are three major approaches. One is based on frame theory, mainly pursued by \cite{pesenson2008sampling}, \cite{wang2014local}, addressing the problem of finding conditions for the existence of dual frames for reconstruction from sampled data. A second approach, see e.g. \cite{wang2014local}, \cite{narang2013signal}, looks for iterative reconstruction algorithms. A third approach applies the recently developed machinery of compressive sensing to the problem of graph sampling and it was followed by \cite{zhu2012approximating}. In this work,  we provide an algorithm which allows perfect reconstruction of a bandlimited signal with a finite number of steps.

To state the problem formally, we want to reconstruct a signal $\bff$ defined on graph $\G$ by samples taken from a subset $S$, i.e., from 
\begin{equation}
\bff_S := \mathbf{D} \bff.
\end{equation}
The signal $\bff$ is supposed to be bandlimited in the sense that its support in the GFT domain is limited to the set $F$. Next we provide a theorem which gives us the necessary and sufficient condition for reconstruction.

\begin{theorem} [Sampling Theorem]
\label{theorem::sampling_theorem}
Let $F$ be the set of frequencies and $S$ be the sampling set of vertices. It is possible to reconstruct a signal, bandlimited to the set $F$ in GFT domain, only from its values on the sampling set $S$, if and only if 
\begin{equation}
\label{sampling_theorem_condition}
\|\mathbf{B} \mathbf{D}^c  \|_2 = \| \mathbf{D}^c \mathbf{B} \|_2 < 1,
\end{equation}
i.e. the operator $ \mathbf{B} \mathbf{D}^c$ does not have any perfectly localized eigenvector.
\end{theorem}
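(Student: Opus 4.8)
The plan is to recast ``reconstructability'' as an injectivity statement and then reduce it to the perfect-localization criterion already established in Theorem~\ref{theorem_unit_eigenvalue}. A bandlimited signal $\bff$ satisfies $\mathbf{B}\bff = \bff$, and from it we observe only $\bff_S = \mathbf{D}\bff$. Since the sampling map $\bff \mapsto \mathbf{D}\bff$ is linear, unique recovery of every bandlimited $\bff$ from $\bff_S$ is possible if and only if this map is injective on the bandlimited subspace $\operatorname{range}(\mathbf{B})$, i.e. if and only if its kernel is trivial. First I would therefore characterize that kernel: a bandlimited $\bff$ lies in it precisely when $\mathbf{D}\bff = \b0$, and two distinct bandlimited signals yield the same samples exactly when their difference is such a nonzero kernel element, so injectivity is genuinely equivalent to reconstructability.

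Next, using $\mathbf{D} + \mathbf{D}^c = \mathbf{I}$, the condition $\mathbf{D}\bff = \b0$ is equivalent to $\mathbf{D}^c\bff = \bff$. Hence a nonzero element of the kernel is exactly a nonzero vector satisfying simultaneously $\mathbf{B}\bff = \bff$ and $\mathbf{D}^c\bff = \bff$ --- that is, a signal perfectly localized over the frequency set $F$ and the vertex set $S^c$. At this point I would invoke Theorem~\ref{theorem_unit_eigenvalue} verbatim, with the vertex-limiting operator $\mathbf{D}$ replaced by its complement $\mathbf{D}^c$: such a vector exists if and only if $\mathbf{B}\mathbf{D}^c\mathbf{B}$ has an eigenvalue equal to one, equivalently --- by the identity $\sigma_i(\mathbf{B}\mathbf{D}^c) = \sigma_i(\mathbf{D}^c\mathbf{B})$ together with the spectral-norm characterization of perfect localization noted after Theorem~\ref{theorem_unit_eigenvalue} --- if and only if $\|\mathbf{B}\mathbf{D}^c\|_2 = \|\mathbf{D}^c\mathbf{B}\|_2 = 1$.

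To finish I would observe that $\mathbf{B}$ and $\mathbf{D}^c$ are orthogonal projectors, so each has spectral norm at most one and therefore $\|\mathbf{B}\mathbf{D}^c\|_2 \le 1$ unconditionally. Consequently the failure of reconstruction (a nontrivial kernel, i.e. the existence of a perfectly localized eigenvector of $\mathbf{B}\mathbf{D}^c\mathbf{B}$) corresponds exactly to the extremal case $\|\mathbf{B}\mathbf{D}^c\|_2 = 1$, and reconstruction is possible precisely in the complementary strict-inequality regime $\|\mathbf{B}\mathbf{D}^c\|_2 < 1$. This yields both directions of the stated equivalence at once.

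The only real subtlety I anticipate is the correct translation of ``reconstructible from the samples on $S$'' into the algebraic statement that no nonzero bandlimited signal is supported entirely on $S^c$; once that dictionary is in place, the theorem is an immediate specialization of Theorem~\ref{theorem_unit_eigenvalue} to the complementary vertex set, and the remaining norm-versus-unit-eigenvalue bookkeeping is routine. Note that a constructive inversion formula is \emph{not} needed here, since the statement concerns only the \emph{existence} of a reconstruction; the explicit finite-step algorithm is deferred to the sequel.
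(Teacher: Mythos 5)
Your proof is correct. The necessity direction coincides with the paper's: both identify failure of reconstruction with the existence of a nonzero bandlimited signal supported on $S^c$, and both convert that into the condition $\|\mathbf{B}\mathbf{D}^c\|_2=1$ via Theorem~\ref{theorem_unit_eigenvalue}. Where you diverge is the sufficiency direction. The paper writes $\mathbf{D}\bff=(\mathbf{I}-\mathbf{D}^c\mathbf{B})\bff$ for bandlimited $\bff$ and argues that $\|\mathbf{D}^c\mathbf{B}\|_2<1$ makes $\mathbf{I}-\mathbf{D}^c\mathbf{B}$ invertible, which immediately hands over an explicit reconstruction operator $(\mathbf{I}-\mathbf{D}^c\mathbf{B})^{-1}$ (and, via its Neumann series, the iterative and finite-step algorithms the paper develops afterwards). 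You instead argue abstractly: the sampling map is injective on $\operatorname{range}(\mathbf{B})$ exactly when its kernel there is trivial, and that kernel is the perfectly-localized subspace, so both implications drop out of a single equivalence. Your version is cleaner and more symmetric for a pure existence statement --- reconstructability of a finite-dimensional linear map is indeed just injectivity --- while the paper's version is constructive and feeds directly into Theorem~\ref{theorem::necessary_and_sufficient}. One small point worth making explicit if you write this up: the passage from ``$\mathbf{B}\mathbf{D}^c\mathbf{B}$ has a unit eigenvalue'' to ``$\|\mathbf{B}\mathbf{D}^c\|_2=1$'' rests on $\mathbf{B}\mathbf{D}^c\mathbf{B}=(\mathbf{D}^c\mathbf{B})^T(\mathbf{D}^c\mathbf{B})$, so the eigenvalues of the former are the squared singular values of the latter; you gesture at this with the remark following Theorem~\ref{theorem_unit_eigenvalue}, and it is routine, but it is the hinge of your argument.
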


\begin{proof}
The sampled signal $\bff_S$ can be rewritten as $\bff_S=\mD \bff=(\mathbf{I}-\mD^c)\bff=(\mathbf{I}-\mD^c\mB)\bff$, where in the last equality we exploited the band-limiting property of $\bff$. The recovery of $\bff$ from $\bff_S$ is possible if the matrix $\mathbf{I}-\mD^c\mB$ is invertible. This is of course possible if (\ref{sampling_theorem_condition}) holds true. 
On the other side, if $\|\mathbf{B} \mathbf{D}^c  \|_2 = 1$, then from Theorem \ref{theorem_unit_eigenvalue} it follows that there exists a bandlimited signal perfectly localized on $S^c$. Therefore if we sample such signal on $S$ - we will get only zero values and it will be impossible to reconstruct non-zero values on $S^c$. \qed
\end{proof}
Alternatively, taking into account Theorem \ref{theorem::dof}, $S^c$ does not contain any perfectly localized bandlimited signal if
\begin{equation}
\label{condition_sampling}
\rank \mathbf{D}^c = \rank \mathbf{B}^c \mathbf{D}^c.
\end{equation}
Next we provide theorem which provides a formula for reconstruction of a bandlimited signal from its samples.
\begin{theorem}
\label{theorem::necessary_and_sufficient}
If the condition of the sampling theorem holds true, i.e.
\begin{equation}
\| \mathbf{B} \mathbf{D}^c \|_2 < 1,
\end{equation}
then any signal bandlimited to the set of frequencies $F$ can be reconstructed from its samples on the set $S$ using the following reconstruction formula
\begin{equation}
\label{eq::sampling_theorem_formula}
\bff = \sum_{i=1}^{|F|} \frac{1}{\sigma_i^2} \langle \mD \bff, \bpsi_i \rangle \bpsi_i,
\end{equation}
where $\left\{ \bpsi_i \right\}_{i = 1..N}$ and $\left\{ \sigma^2_i \right\}_{i = 1..N}$ are the eigenvectors and eigenvalues of $\mathbf{B} \mathbf{D} \mathbf{B}$.
\end{theorem}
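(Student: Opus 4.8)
The plan is to recognize that the $|F|$ Slepian vectors associated with nonzero eigenvalues form an orthonormal basis of the bandlimited subspace, to expand $\bff$ in that basis, and then to rewrite each expansion coefficient in terms of the sampled signal $\mD\bff$. Since $\bff$ is bandlimited, $\mB\bff=\bff$, so $\bff$ lies in the range $\R(\mB)$, and writing it as $\bff=\sum_{i=1}^{|F|}\langle\bff,\bpsi_i\rangle\bpsi_i$ is the natural starting point. The whole argument then reduces to showing that the coefficient $\langle\bff,\bpsi_i\rangle$ equals $\sigma_i^{-2}\langle\mD\bff,\bpsi_i\rangle$.

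First I would verify that the $|F|$ vectors $\bpsi_i$ indexed by the nonzero eigenvalues really span $\R(\mB)$ and that all these $\sigma_i^2$ are strictly positive, so that the division in (\ref{eq::sampling_theorem_formula}) is meaningful. This is the step where the hypothesis $\|\mB\mD^c\|_2<1$ is essential, and I expect it to be the main obstacle. Restricted to $\R(\mB)$ the operators satisfy $\mB\mD\mB+\mB\mD^c\mB=\mB$, and because the eigenvalues of $\mB\mD^c\mB$ coincide with the squared singular values of $\mB\mD^c$, the assumption forces every eigenvalue of $\mB\mD^c\mB$ on $\R(\mB)$ to be strictly below one; correspondingly each $\sigma_i^2$ is strictly positive, no bandlimited Slepian vector is annihilated by $\mD$, and the $\bpsi_i$, $i=1,\dots,|F|$, furnish an orthonormal basis of $\R(\mB)$ on which $\mB\mD\mB$ is invertible.

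The remaining step is a short computation exploiting the bandlimitedness of both $\bff$ and $\bpsi_i$ together with the eigenrelation (\ref{slep_func:bdb_eigendecomposition}). Using $\mB\bff=\bff$, $\mB\bpsi_i=\bpsi_i$, and the symmetry of $\mB$ and $\mD$, I would obtain
\begin{equation}
\langle\mD\bff,\bpsi_i\rangle=\bpsi_i^T\mD\bff=\bpsi_i^T\mB\mD\mB\bff=(\mB\mD\mB\bpsi_i)^T\bff=\sigma_i^2\,\langle\bff,\bpsi_i\rangle.
\end{equation}
Since $\sigma_i^2>0$, this yields $\langle\bff,\bpsi_i\rangle=\sigma_i^{-2}\langle\mD\bff,\bpsi_i\rangle$, and substituting into the orthonormal expansion $\bff=\sum_{i=1}^{|F|}\langle\bff,\bpsi_i\rangle\bpsi_i$ reproduces (\ref{eq::sampling_theorem_formula}) exactly. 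The only delicate point is thus the positivity and spanning claim of the first step; once that is in place, the reconstruction formula follows from the one-line identity above.
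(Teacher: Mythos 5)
Your proposal is correct and follows essentially the same route as the paper: expand $\bff$ in the Slepian basis of $\R(\mB)$, use $\bpsi_i=\sigma_i^{-2}\mB\mD\mB\bpsi_i$ together with self-adjointness and bandlimitedness to convert each coefficient into $\sigma_i^{-2}\langle\mD\bff,\bpsi_i\rangle$. Your justification that the hypothesis forces all $|F|$ relevant eigenvalues to be strictly positive (via $\mB\mD\mB+\mB\mD^c\mB=\mB$ on $\R(\mB)$) is the same fact the paper invokes more tersely by saying no bandlimited vector lies in $\ker\mB\mD\mB$; your version is just spelled out more explicitly.
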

\begin{proof}
Since $\left\{ \bpsi_i \right\}_{i = 1..N}$ constitute a basis for $\Real^N$, we can write, for any $\bg$,
\begin{equation}
\bg = \sum_{i = 1}^{N} \langle \bg, \bpsi_i \rangle \bpsi_i,
\end{equation}
and for its bandlimited projection
\begin{equation}
\mB \bg = \sum_{i = 1}^K \langle \mB \bg, \bpsi_i \rangle \bpsi_i,
\end{equation}
where $|F| \leq K \leq N$. By the condition of theorem there is no any perfectly localized bandlimited vector, therefore all the eigenvectors from $\ker \mB \mD \mB$ are out of band $F$ and we conclude that $K = |F|$.
Let $\bff = \mB \bg$, then we can write
\begin{equation}
\bff = \sum_{i = 1}^{|F|} \langle \bff, \frac{1}{\sigma^2_i} \mB \mD \mB \bpsi_i \rangle \bpsi_i =
\sum_{i = 1}^{|F|} \frac{1}{\sigma^2_i}  \langle \mD \bff, \bpsi_i \rangle \bpsi_i,
\end{equation}
where we have used the fact that operators $\mD$ and $\mB$ are self-adjoint and vectors from the set $\left\{ \bpsi_i \right \}_{i=1..|F|}$ are bandlimited. \qed
\end{proof}

The interesting new aspect is the role played by the graph topology. In particular, besides choosing the right number of samples, for any given GFT bandwidth, in order to fulfill the previous perfect reconstruction conditions, a particular relevant question is how to choose the sample vertices in the graph.  Some hints can come from (\ref{sampling_theorem_condition}). One could choose the sample vertices so that the maximum singular value of $\mathbf{B}\mathbf{D}^c$ is minimum. Or, one could look at the set of sample vertices that maximize the conditioning of matrix $\mathbf{G}$ in (\ref{slae_homogeneous}). These are, in general, combinatorial problems, and then NP-hard. However, the structure of the GFT basis vectors can provide useful hints. In particular, for bandlimited signals lying in the signal subspace spanned by a number of GFT basis vectors, looking at the amplitudes of the basis vectors entries is a good indication on which vertices to select. This is indeed an aspect worth of further investigation. 


Real world signals usually are not exactly band-limited. In general, any signal can be decomposed in two parts
\begin{equation}
\bff = \mB \bff + \mB^c \bff.
\end{equation}
We wish to assess now the impact of signal's non bandlimitedness on the reconstruction of signal from its samples. Let $\bff$ be the initial signal and $\tilde{\bff}$ be its reconstruction given by (\ref{eq::sampling_theorem_formula}), i.e.
\begin{equation}
\tilde{\bff} = \mB \bff + \sum_{i \in F} \frac{1}{\sigma_i^2} \langle \mD \mB^c \bff, \bpsi_i \rangle \bpsi_i.
\end{equation}
The reconstruction error in this case is given by
\begin{align}
\label{eq::non_bandlim_mse}
\| \bff - \tilde{\bff} \|_2 &= \left\| \sum_{i \in F^c}  \langle \bff, \bpsi_i \rangle \bpsi_i - \sum_{i \in F} \frac{1}{\sigma_i^2} \langle \mD \mB^c \bff, \bpsi_i \rangle \bpsi_i \right\|_2 \nonumber \\
&= \sum_{i \in F^c} \abs{\langle \bff, \bpsi_i \rangle}^2 + \sum_{i \in F} \frac{1}{\sigma_i^4} \abs{\langle \mD \mB^c \bff, \bpsi_i \rangle}^2.
\end{align}
In (\ref{eq::non_bandlim_mse}) the first term corresponds to the error given by the out of band energy of signal and the second term gives the aliasing error, i.e. an in-band error introduced by the out of band components.

Some numerical results, reported below, show that there is an optimal value in choosing the signal bandwidth with respect to  the reconstruction from its samples.
\begin{figure}[t]
\centering
\includegraphics[width=7cm]{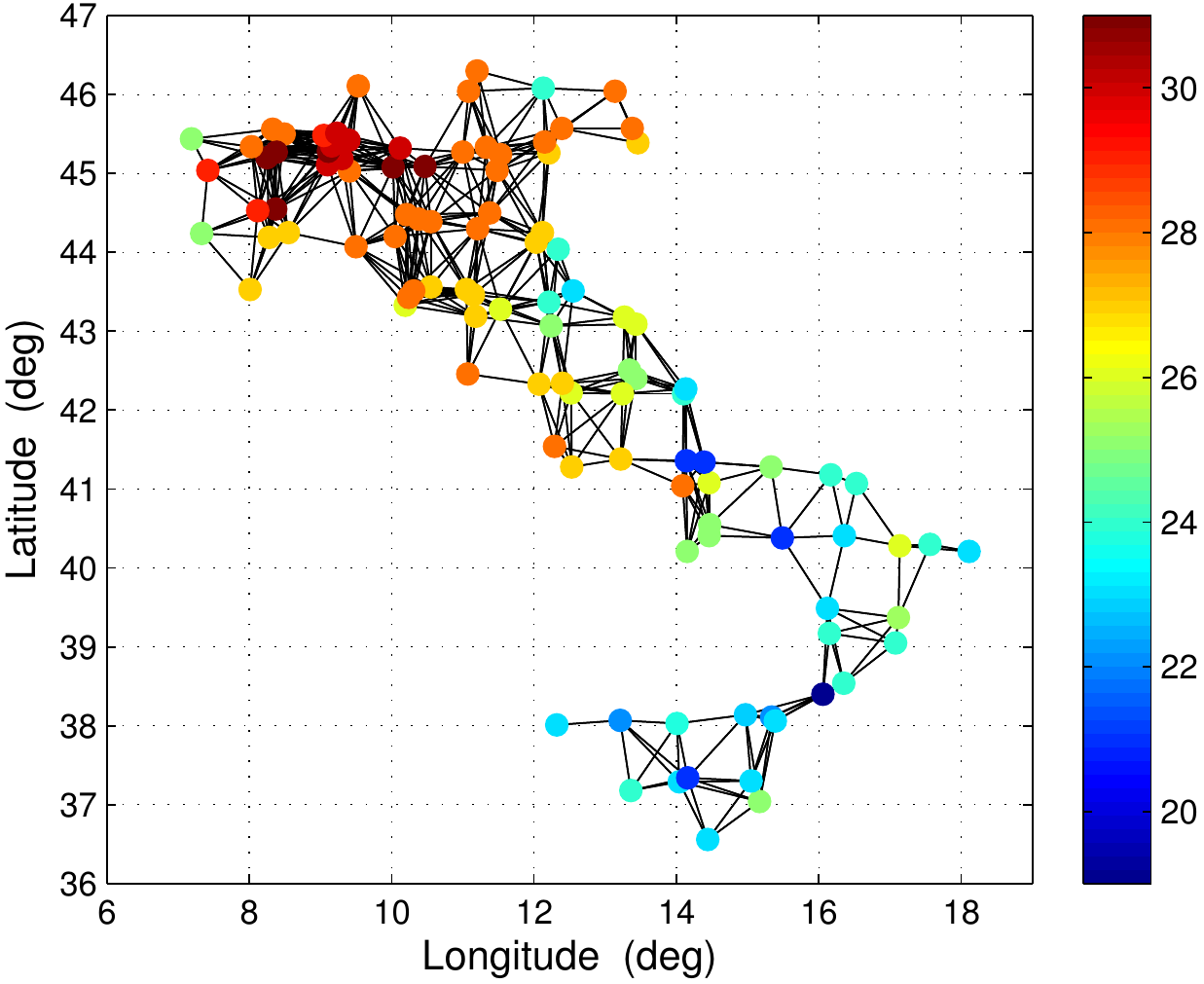}
\caption{Graph signal associated to 106 meteo stations in Italy.}\label{Italy}
\end{figure}

\begin{example}
As a numerical example, we consider the real temperature data given by 106 meteorological stations in Italy. In Fig. \ref{Italy} the corresponding graph signal is illustrated. Each graph vertex corresponds to a station and is colored with respect to the temperature registered on June 6, 2015. The graph was formed according to the geographic distance between points: there is a link between two vertices if the distance between two stations is smaller than a given coverage radius.

Based on this real-data example, it is useful to evaluate the impact of choosing the signal bandwidth on the reconstruction error. Indeed the illustrated graph signal is not exactly bandlimited, so that sampling theorem does not hold in this case. In Fig. \ref{MSE_rand} we show  the  normalized mean squared error (NMSE) as a function of the bandwidth, in accordance with (\ref{eq::non_bandlim_mse}), or, in other words, on the cardinality of $F$ for different sizes of the sampling set $S$. For each curve, parameterized to the cardinality of $S$, the  samples were chosen randomly and the results have been averaged over 500 independent selections of sampling sets. The simulations show that for this signal there exist optimal choices of the cardinality of $F$ that minimize the reconstruction error.
\end{example} 
%

\section{Conclusions}
In this work, we derived the conditions for perfect localization of a signal in both vertex and GFT domains. These conditions are strictly related to the conditions ensuring perfect reconstruction of signal from sparse samples. Interesting new developments concern the connection of localization properties in the joint vertex-frequency domain with the reconstruction from sampled observations. The effect of out of band energy on sampling was considered and the tool for assessing an optimal bandwidth was given. The results were supported by an example running on real data.
\begin{figure}[t]
\centering
\includegraphics[width=7cm]{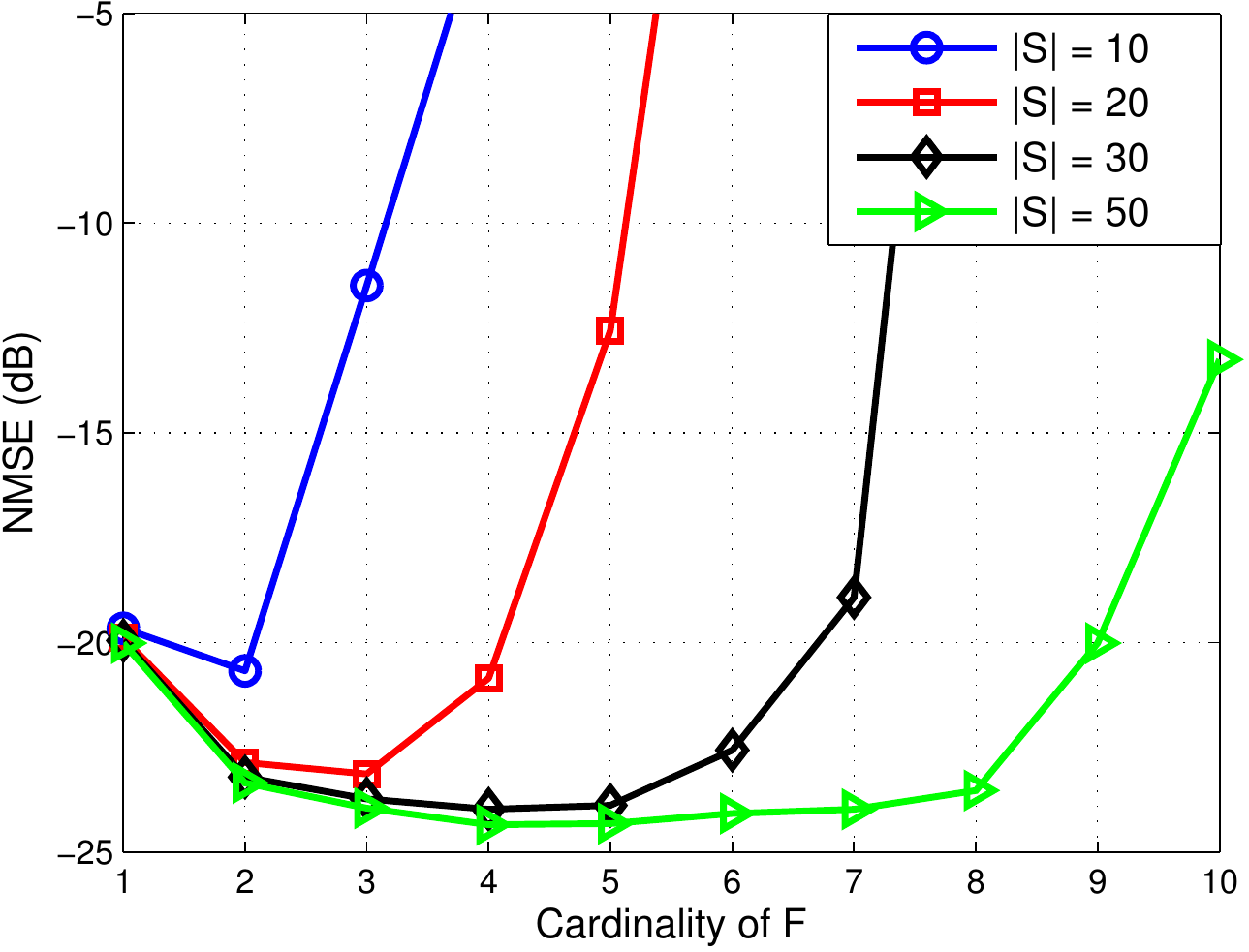}
\caption{Normalized mean squared error versus bandwidth used for processing, for different cardinalities of sampling set.}\label{MSE_rand}
\end{figure}
 
\bibliographystyle{MyIEEE}
\bibliography{refs}

\end{document}